\documentclass{article}
\usepackage{tikz}
\usepackage{blkarray}
\usetikzlibrary{calc}
\usepackage{mathtools}
\usepackage{hyperref}
\usepackage{float}

\input{lafont-note.sty}

\title{Reversible $k$-valued logic circuits are finitely generated for
  odd $k$}

\author{\begin{tabular}{c}
    Peter Selinger
  \end{tabular}
}

\date{Dalhousie University}

\begin{document}
\maketitle

\section{Introduction}

Let $\Z_k=\s{0,\ldots,k-1}$, and let $\Ss[k]$ be the monoidal groupoid
whose objects are natural numbers $n$, and whose morphisms $f:k\to k$
are invertible functions $f:\Z_k^n\to \Z_k^n$.  In his 2003 paper
``Towards an algebraic theory of {Boolean} circuits''
{\cite[p.~298]{Lafont-2003}}, Lafont notes that, although $\Ss[k]$ is
not finitely generated when $k$ is even, it is finitely generated when
$k$ is odd. For the proof, he cites a private communication by me. The
purpose of this short note is to make the content of that
unpublished communication available. 

Since the proof referred to in Lafont's paper was never published,
others have proved the result independently. The earliest such
published proof that I am aware of is due to Boykett
{\cite{Boykett2015}}.

\section{Background: linear and affine invertible functions}

Let $\GL(\Z_k)$ and $\GA(\Z_k)$ be the monoidal subgroupoids of
$\Ss[k]$ consisting of invertible linear functions and invertible
affine functions, respectively. Here, we regard $\Z_k$ as a ring with
addition and multiplication. As usual, a function $f:\Z_k^n\to \Z_k^n$
is {\em linear} if $f(v+w) = f(v) + f(w)$ and $f(av)=af(v)$ for all
$v,w\in\Z_k^n$ and $a\in\Z_k$, and $f$ is {\em affine} if there exists
some $u\in\Z_k^n$ and a linear function $g$ such that $f(v) = u+g(v)$,
for all $v$. It is well-known from the theory of integer matrices that
$\GL(\Z_k)$ is finitely generated by the gates $D,U:2\to 2$, given by
$D(x,y)=(x,x+y)$ and $U(x,y)=(x+y,y)$, together with a gate
$H_a:1\to 1$ for each invertible element $a\in\Z_k$, given by
$H_a(x)=ax$.
If we moreover add a gate $v:1\to 1$ defined by $v(x)=x+1$, we
obtain a finite set of generators for $\GA(\Z_k)$. Lafont's notation
for these gates is shown in Figure~\ref{fig-gates}.
\begin{figure}[tp]
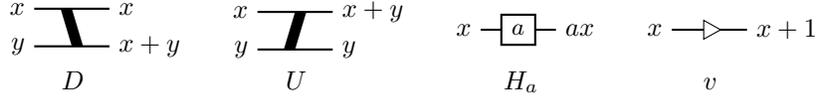

  \[
  \begin{array}{cccc}
    \m{\begin{qcircuit}[scale=0.5]
        \grid{2}{0,1};
        \leftlabel{$x$}{0,1};
        \leftlabel{$y$}{0,0};
        \lafontgate{1,1}{1,0};
        \rightlabel{$x$}{2,1};
        \rightlabel{$x+y$}{2,0};
      \end{qcircuit}}
    &\m{\begin{qcircuit}[scale=0.5]
        \grid{2}{0,1};
        \leftlabel{$x$}{0,1};
        \leftlabel{$y$}{0,0};
        \lafontgate{1,0}{1,1};
        \rightlabel{$x+y$}{2,1};
        \rightlabel{$y$}{2,0};
      \end{qcircuit}}
    &\m{\begin{qcircuit}[scale=0.5]
        \grid{2}{0};
        \leftlabel{$x$}{0,0};
        \rightlabel{$ax$}{2,0};
        \gate{$a$}{1,0};
      \end{qcircuit}}
    &\m{\begin{qcircuit}[scale=0.5]
        \grid{2}{0};
        \leftlabel{$x$}{0,0};
        \rightlabel{$x+1$}{2,0};
        \lighttriangle{1,0};
      \end{qcircuit}}
    \\
    D~~~~ & U~~~~ & H_a & v~~~~
  \end{array}
  \]
  \caption{Affine gates}\label{fig-gates}
\end{figure}

Note that the single transposition
$(0,\ldots,0,0)\leftrightarrow (0,\ldots,0,1)$ of $\Z_k^n$, together
with all invertible affine transformations, suffices to generate the
group of invertible functions on $\Z_k^n$. This is because by taking
affine conjugates, we can get all transpositions of the form
$(x_1,x_2,\ldots,x_{k-1},{x_k},x_{k+1},\ldots,x_n)\leftrightarrow
(x_1,x_2,\ldots,x_{k-1},{x_k}+1,x_{k+1},\ldots,x_n)$. These generate
all invertible functions because the permutation group $S(X)$ on any
set $X$ is generated by any set of transpositions $(ab)$ that form the
edges of a connected graph on $X$.

\section{The result}\label{sec-main}

\begin{proposition}
  If $k\geq 3$ is odd, then $\Ss[k]$ is finitely generated. In fact,
  it is generated by gates of arity $2$ and less.
\end{proposition}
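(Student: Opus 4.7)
My plan is to take as finite generating set $\Gamma$ the affine generators $D$, $U$, $H_a$ (for each invertible $a\in\Z_k$), $v$ from the Background, together with the arity-$2$ gate $\tau_2\colon\Z_k^2\to\Z_k^2$ swapping $(0,0)$ with $(0,1)$. Every gate in $\Gamma$ has arity at most~$2$, and $D,U,H_a,v$ already generate $\GA(\Z_k^n)$ at each arity $n$. By the remark preceding the proposition, $\GA(\Z_k^n)$ together with the transposition $\tau_n\colon(0,\ldots,0,0)\leftrightarrow(0,\ldots,0,1)$ generates all invertible functions on $\Z_k^n$, so the proposition reduces to showing that $\tau_n$ can be built from $\Gamma$ for every $n\geq 2$. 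I proceed by induction on $n$; the base case $n=2$ is just $\tau_2\in\Gamma$.

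For the inductive step $n\geq 3$, apply the previously constructed $\tau_{n-1}$ to wires $\{2,\ldots,n\}$, obtaining a product $T_\ell$ of $k$ disjoint transpositions $(a,0,\ldots,0,0)\leftrightarrow(a,0,\ldots,0,1)$ indexed by $a\in\Z_k$. Writing a point of $\Z_k^n$ as $(\vec p,x_n)$ with $\vec p\in\Z_k^{n-1}$, these transpositions interchange $x_n=0$ and $x_n=1$ precisely when $\vec p$ lies on the length-$k$ affine line $\ell=\{(a,0,\ldots,0):a\in\Z_k\}\subseteq\Z_k^{n-1}$. Conjugating $T_\ell$ by any affine map of the form $(\vec p,x_n)\mapsto(\phi(\vec p),x_n)$, where $\phi\in\GA(\Z_k^{n-1})$ sends $\ell$ to an arbitrary length-$k$ line $L\subseteq\Z_k^{n-1}$, yields a gate $T_L=\prod_{\vec p\in L}\mathrm{swap}((\vec p,0),(\vec p,1))$; such a $\phi$ always exists because $\GA(\Z_k^{n-1})$ acts transitively on length-$k$ affine lines. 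Two such gates $T_L$ and $T_{L'}$ commute, since their constituent transpositions are either disjoint or identical, so the composite $T_{L_1}\cdots T_{L_m}$ produces exactly the transpositions $(\vec p,0)\leftrightarrow(\vec p,1)$ for those $\vec p\in\Z_k^{n-1}$ that lie on an odd number of the $L_i$. The induction therefore reduces to expressing the indicator $e_{\vec 0}$ of the origin of $\Z_k^{n-1}$ as a mod-$2$ sum of indicators of length-$k$ affine lines.

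Since all the required lines may be chosen inside a fixed $2$-dimensional coordinate plane of $\Z_k^{n-1}$, this combinatorial claim reduces to the two-dimensional case $\Z_k^2$. For $k=p^a$ an odd prime power, the number of length-$k$ lines through $\vec 0$ in $\Z_k^2$ is $N=p^{a-1}(p+1)$, and a routine analysis using the fact that every order-$p^a$ cyclic subgroup of $\Z_{p^a}^2$ contains a unique subgroup of each smaller order shows that a nonzero $\vec v\in\Z_k^2$ of order $p^{a-\ell}$ lies on exactly $p^\ell$ such lines. Since $p$ is odd, $N$ is even and each $p^\ell$ is odd, so the mod-$2$ sum of the indicators of all length-$k$ lines through $\vec 0$ equals $\mathbf 1+e_{\vec 0}$; adding the mod-$2$ sum $\mathbf 1$ of $k$ parallel length-$k$ lines partitioning $\Z_k^2$ yields $e_{\vec 0}$. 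For composite odd $k$, the Chinese remainder theorem identifies $\Z_k^2$ with $\prod_i\Z_{p_i^{a_i}}^2$ and length-$k$ lines with products of length-$p_i^{a_i}$ lines in each factor, so the prime-power decompositions multiply factorwise to give the composite case.

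The main obstacle is the parity bookkeeping for prime powers $k=p^a$ with $a\geq 2$: verifying that a nonzero vector of order $p^{a-\ell}$ in $\Z_{p^a}^2$ lies on exactly $p^\ell$ length-$p^a$ lines through the origin, which requires a careful enumeration of cyclic subgroups of $\Z_{p^a}^2$. It is precisely here that oddness of $k$ enters essentially: for even $k$, the relevant counts flip parity, no isolating mod-$2$ combination exists, and indeed $\Ss[k]$ fails to be finitely generated, matching Lafont's original observation.
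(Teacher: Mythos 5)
Your proof is correct, and while it shares the paper's overall skeleton, the key combinatorial step is genuinely different. Both arguments use the same finite generating set (the affine gates together with the single arity-$2$ transposition $(0,0)\leftrightarrow(0,1)$, which is the paper's controlled negation $S_2$), and both pass from arity $n-1$ to arity $n$ by applying the smaller gate to all but the first wire, obtaining a bundle of $k$ transpositions indexed by an affine line in $\Z_k^{n-1}$, and then composing affine conjugates of this bundle so that the transpositions cancel in pairs everywhere except over the origin. The difference lies in the identity $e_{\vec 0}\equiv\sum_L\mathbf 1_L\pmod 2$ that drives the cancellation. The paper works one-dimensionally: its circuit $T$ (four copies of $S_2$) toggles the last coordinate exactly when the middle coordinate lies in $\{0,1\}$; then $(k-1)/2$ shifted conjugates tile $\{1,\ldots,k-1\}$ --- this is the only place oddness enters, via $k-1$ being even --- and a final $S_2$ covering all $k$ values leaves only the origin. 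You work two-dimensionally, summing over all $p^{a-1}(p+1)$ order-$k$ cyclic lines through the origin of $\Z_{p^a}^2$ and observing that each nonzero point lies on an odd number $p^{\ell}$ of them while the total count is even, then assembling composite $k$ by the Chinese remainder theorem. I checked your counting lemma (a vector of order $p^{a-\ell}$ lies on exactly $p^{\ell}$ order-$p^a$ cyclic subgroups of $\Z_{p^a}^2$) and it is right, so the argument goes through. Your route is heavier: besides that lemma it needs the completion of unimodular vectors to invertible matrices over $\Z_k$ (to justify that $\GA(\Z_k^{n-1})$ is transitive on length-$k$ lines) and the CRT bookkeeping, where the paper needs only the parity of $k-1$. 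What it buys is a clear structural picture of where oddness enters --- through the parities of line--point incidence counts --- and it packages the entire difficulty into one self-contained counting statement rather than an explicit circuit identity.
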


\begin{proof}
  Let $S:1\to 1$ be the ``generalized negation'' gate defined by
  $S(0)=1$, $S(1)=0$, and $S(x)=x$ otherwise.  Consider the
  ``controlled negation'' gate, defined by
  \[
  S_{n+1} (x_1, \ldots, x_n, z) = (x_1, \ldots, x_n, z'),
  \]
  where $z'=S(z)$ if $x_1,\ldots,x_n=0$, and $z'=z$
  otherwise. Figure~\ref{fig-S} shows the notation we use for the
  controlled negation gate. The gate $S_{n}$ corresponds to a single
  transposition of $\Z_k^n$, exchanging $(0,\ldots,0,0)$ and
  $(0,\ldots,0,1)$. Therefore, the family of gates $S_{n}$ for
  $n\geq 1$, together with the affine transformations, generates
  $\Ss[k]$.
  \begin{figure}[tp]
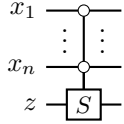

    \[
    \m{\begin{qcircuit}[scale=0.5]
        \grid{2}{0,1,2.5};
        \leftlabel{$x_1$}{0,2.5};
        \vdotslabel{0.5,1.25}
        \vdotslabel{1.5,1.25}
        \leftlabel{$x_n$}{0,1};
        \leftlabel{$z$}{0,0};
        \gencontrolled{\odot}{\gate{$S$}}{1,0}{1,2.5};
      \end{qcircuit}}
    \]
    \caption{The controlled negation gate $S_{n+1}$}\label{fig-S}
  \end{figure}

  Now consider the circuits $T$ and $U$ defined in Figure~\ref{fig-3}.
  \begin{figure}[tp]
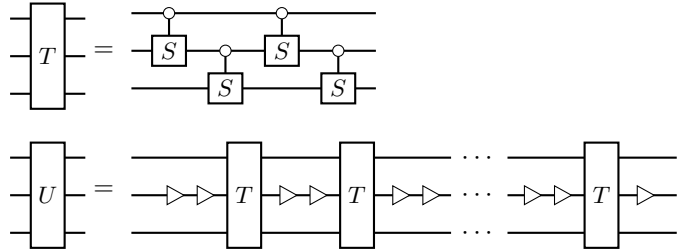

    \[
    \begin{array}{l}
      \m{\begin{qcircuit}[scale=0.5]
          \grid{2}{0,1,2};
          \biggate{$T$}{1,0}{1,2};
        \end{qcircuit}}
      =
      \m{\begin{qcircuit}[scale=0.5]
          \grid{6.5}{0,1,2};
          \gencontrolled{\odot}{\gate{$S$}}{1,1}{2};
          \gencontrolled{\odot}{\gate{$S$}}{2.5,0}{1};
          \gencontrolled{\odot}{\gate{$S$}}{4,1}{2};
          \gencontrolled{\odot}{\gate{$S$}}{5.5,0}{1};
        \end{qcircuit}}
      \\\\
    \m{\begin{qcircuit}[scale=0.5]
        \grid{2}{0,1,2};
        \biggate{$U$}{1,0}{1,2};
      \end{qcircuit}}
    =
    \m{\begin{qcircuit}[scale=0.5]
        \gridx{-1}{7.5}{0,1,2};
        \lighttriangle{0.1,1};
        \lighttriangle{0.9,1};
        \biggate{$T$}{2,0}{2,2};
        \lighttriangle{3.1,1};
        \lighttriangle{3.9,1};
        \biggate{$T$}{5,0}{5,2};
        \lighttriangle{6.1,1};
        \lighttriangle{6.9,1};
        \midlabel{$\cdots$}{8.25,0};
        \midlabel{$\cdots$}{8.25,1};
        \midlabel{$\cdots$}{8.25,2};
        \gridx{9}{13.5}{0,1,2};
        \lighttriangle{9.6,1};
        \lighttriangle{10.4,1};
        \biggate{$T$}{11.5,0}{11.5,2};
        \lighttriangle{12.6,1};
      \end{qcircuit}}
    \end{array}
    \]
    \caption{The circuits $T$ and $U$. Here, $U$ contains
      $\frac{k-1}{2}$ occurrences of $T$ and $k$ occurrences of the
      affine $v$-gate.}\label{fig-3}
  \end{figure}
  This circuit $T$ performs two transpositions
  $(0,0,0) \leftrightarrow (0,0,1)$ and
  $(0,1,0) \leftrightarrow (0,1,1)$ and is the identity elsewhere. The
  circuit $U$ uses $(k-1)/2$ affine conjugates of $T$ to perform the
  $k-1$ transpositions $(0,x,0) \leftrightarrow (0,x,1)$, for
  $x=1,\ldots,k-1$, and is the identity elsewhere.

  Finally, if we compose $U$ with $S_2$ as shown in
  Figure~\ref{fig-comp}, we obtain a single transposition
  $(0,0,0) \leftrightarrow (0,0,1)$, or in other words, $S_3$.
  \begin{figure}[tp]
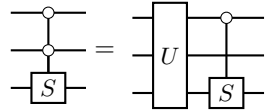

    \[
    \m{\begin{qcircuit}[scale=0.5]
        \grid{2}{0,1,2};
        \gencontrolled{\odot}{\gate{$S$}}{1,0}{1,2};
      \end{qcircuit}}
    =
    \m{\begin{qcircuit}[scale=0.5]
        \grid{3.5}{0,1,2};
        \biggate{$U$}{1,0}{1,2};
        \gencontrolled{\odot}{\gate{$S$}}{2.5,0}{2};
      \end{qcircuit}}
    \]
    \caption{Expressing $S_3$ in terms of $S_2$ and affine gates}\label{fig-comp}
  \end{figure}
  Therefore, $S_3$ is definable from $S_2$ in the presence of affine
  gates. By adding $n$ additional controls to every $S$-gate, it
  follows that $S_{n+3}$ is definable from $S_{n+2}$, for all
  $n\geq 0$. By induction, all $S_n$ for $n\geq 3$ are definable from
  $S_2$, and therefore $\Ss[k]$ is finitely generated when $k$ is odd
  and $k\geq 3$.
\end{proof}


\newpage
\bibliographystyle{unsrt}
\bibliography{lafont-note}

\end{document}